\newcommand{\e}{\mathrm{e}}
\newcommand{\D}{\mathrm{d}}
\newcommand{\C}{\mathbb{C}}
\newcommand{\N}{\mathbb{N}}
\newcommand{\R}{\mathbb{R}}
\newcommand{\Oo}{\mathcal{O}}
\newcommand{\Hm}[1]{\leavevmode{\marginpar{\tiny%
$\hbox to 0mm{\hspace*{-0.5mm}$\leftarrow$\hss}%
\vcenter{\vrule depth 0.1mm height 0.1mm width \the\marginparwidth}%
\hbox to
0mm{\hss$\rightarrow$\hspace*{-0.5mm}}$\\\relax\raggedright #1}}}
\newtheorem{claim}{Claim}[section]
\newtheorem{theorem}[claim]{Theorem}
\newtheorem{lemma}[claim]{Lemma}
\theoremstyle{definition}
\newtheorem{remark}[claim]{Remark}
\begin{document}

\title{Strong coupling asymptotics for Schr\"odinger operators with an interaction
 supported by an open arc in three dimensions}

%
\author{  Pavel Exner$\,^a$\thanks{The research has been supported by the project \mbox{14-06818S} of the Czech Science Foundation and by the project DEC-2013/11/B/ST1/03067 of the Polish  National Science Centre.} \,\,  and
Sylwia Kondej$\,^b$}
%
\date{
\small \emph{
\begin{quote}
\begin{itemize}
\item[$a)$]
Department of Theoretical Physics, Nuclear Physics Institute,
Czech Academy of Sciences, Hlavn\'{\i} 130, 25068 \v{R}e\v{z} near
Prague  \\ Doppler Institute, Czech Technical University,
B\v{r}ehov\'{a} 7, 11519 Prague, Czech Republic; exner@ujf.cas.cz
\item[$b)$]
Institute of Physics, University of Zielona G\'ora, ul.\ Szafrana
4a, 65246 Zielona G\'ora, Poland; skondej@if.uz.zgora.pl
\end{itemize}
\end{quote}}}
%


\maketitle
\begin{abstract}
\noindent We consider Schr\"odinger operators with a strongly
attractive singular interaction supported by a finite curve
$\Gamma$ of lenghth $L$ in $\R^3$. We show that if $\Gamma$ is
$C^4$-smooth and has regular endpoints, the $j$-th eigenvalue of
such an operator has the asymptotic expansion $\lambda_j
(H_{\alpha,\Gamma})= \xi_\alpha +\lambda
_j(S)+\mathcal{O}(\mathrm{e}^{\pi \alpha })$ as the coupling
parameter $\alpha\to\infty$, where $\xi_\alpha =
-4\,\mathrm{e}^{2(-2\pi\alpha +\psi(1))}$ and $\lambda _j(S)$ is
the $j$-th eigenvalue of the Schr\"odinger operator
$S=-\frac{\D^2}{\D s^2 }- \frac14 \gamma^2(s)$ on $L^2(0,L)$ with
Dirichlet condition at the interval endpoints in which $\gamma$ is
the curvature of $\Gamma$.
\end{abstract}

\bigskip

{\bf Mathematics Subject Classification (2010):} Primary 81Q10;
Secondary 35J10, 35P15

{\bf Keywords:} Singular perturbations, eigenvalue asymptotics

\section{Introduction}\label{introduction}

Schr\"odinger operators with singular interactions supported by
zero-measure subsets of the configuration space attracted
attention of mathematicians already several decades ago. One of
the reasons was that  their spectral analysis can be often done
explicitly to a degree. The simplest situation the interaction
support is a discrete set of points has been studied thoroughly,
see the monograph \cite{AGHH}. Later singular interactions
supported by manifolds of codimension one were analyzed
\cite{BEKS, BT}. From 2001 one witnessed a new wave of interest to
such operators with attractive interactions. It was motivated by
two facts. On the one  hand such operators appeared to be good
models for a number of tiny structures studied in solid state
physics, and on the other hand, an intriguing connection between
spectral properties of such operators and the geometry of the
interaction support was found in \cite{EI}. The most prominent manifestation of this connection is the existence of purely geometrically induced bound states \cite{EI, EK4}; a review of the work done
in this area can be found in the paper \cite{Ex08}.

A question of a particular importance concerns the strong coupling
behavior of the spectra of such operators. In this asymptotic
regime the eigenfunctions are strongly localized around the
interaction support and one expects an effective lower-dimensional
dynamics to play role. The corresponding asymptotic expansion were
demonstrated in several situations, for curves in $\R^2$
\cite{EY1, EY2} and $\R^3$ \cite{EK4} as well as for surfaces in
$\R^3$ \cite{EK1}. In all those cases, the next to leading term
was governed by a Schr\"odinger operator of the dimension of the
interaction support with an effective, geometrically induced
potential.

The technique used in all those papers was a combination of
bracketing estimates with suitable coordinate transformations
which allowed one to translate the geometry of the problem into
coefficients of the comparison operator. It had a serious
restriction as it required that the manifold supporting the
interaction has no boundary, being either infinite or a closed
curve or surface. Manifolds with a boundary have been also
considered but only in situations when the latter is connected
with a shrinking `hole' in a surface \cite{EY3} or a shrinking
hiatus in a curve \cite{EK4}. The methods used in those cases were
perturbative and did not help to address the problem of strong
coupling asymptotics for a fixed manifolds with a boundary.

A way to overcome the difficulties with the boundary was proposed
in \cite{EP13}. It used a bracketing estimate again, this time in
the neighborhood of an extended curve, together with a suitable
integral representation of the eigenfunctions. In this way
two-dimensional Schr\"odinger operators with an interaction
supported an open arc, i.e. a finite non-closed curve in $\R^2$,
were treated in \cite{EP13}. It was shown that next-to-leading
term is again given by an auxiliary one-dimensional Schr\"odinger
operator with the curvature-induced potential, this time with the
Dirichlet conditions at the endpoints of the interval that
parametrizes the curve. Our aim in the present paper is to analyze
the analogous problem for Schr\"odinger operators with interaction
support of codimension two being a finite non-closed curve in
$\R^3$.

Such an extension is no way trivial, in particular, due to a
different and more singular character of the interaction. To be
specific, we consider a non-relativistic spinless particle exposed
to a singular interaction supported by a finite curve $\Gamma
\subset \R^3$ with `free' ends. In the following section we shall
describe how one can construct Hamiltonian of such  a system, in
brief it will be identified with a self-adjoint extension of
$-\dot{\Delta } = -\Delta \upharpoonright _{C^{\infty}_{0} (\R^3
\setminus \Gamma )}$, where the latter denotes the restriction of
the Laplacian $-\Delta\,:\, W^{2,2}(\R^3)\to L^2 (\R^3)$ to the
set $C^{\infty}_{0}(\R^3 \setminus\Gamma)$. The self-adjoint
extensions are determined by means of boundary conditions imposed
at $\Gamma $ and classified by a parameter $\alpha\in\R$ which can
be regarded coupling constant\footnote{A caution is needed,
though, due the  particular character of singular interactions
with support of codimension larger than one. In particular, it is
better to avoid formal expressions of the type
$-\Delta-\alpha\delta(x-\Gamma)$, because the limits $\alpha\to
\pm\infty$ for $H_{\alpha,\Gamma}$ correspond to absence of the
interaction and the strong attraction asymptotics, respectively.}.
We denote those operators as $H_{\alpha,\Gamma}$.

We are going to find the \emph{asymptotics of eigenvalues of
$H_{\alpha,\Gamma}$ in the regime of strong coupling, $\alpha\to
-\infty$.} As in the other cases mentioned above the expansion
starts with a divergent term. We are interested in the next one,
expected to be a one-dimensional Schr\"odinger operator with the
same symbol as in the case when $\Gamma$ is a loop. When the
eigenfunctions are strongly localized aroud $\Gamma$ one may expect
their rapid falloff not only transversally but also with the
distance from the curve ends. This suggests that the effective
dynamics should involve Dirichlet boundary conditions as in the
case of codimension one. We are going to show that under mild
regularity assumptions it is indeed the case: the $j$th eigenvalue
of $H_{\alpha,\Gamma}$ admits the expansion
$$
\lambda_j (H_{\alpha,\Gamma})= \xi_\alpha +\lambda_j
(S)+\mathcal{O}(\mathrm{e}^{\pi \alpha})\quad \mathrm{as}\quad
\alpha \to -\infty\,,
$$
where $\lambda_j(S)$ stands for the $j$th eigenvalue of
$$
S=-\frac{\D^2}{\D s^2 }- \frac14 \gamma^2(s)\,:\, D(S)\rightarrow
L^2 (0, L)\,,
$$
with $D(S):= W^{1,2}_0 (0,L)\cap W^{2,2} (0,L)$, where $L$ and
$\gamma$ are the length of $\Gamma$ and its signed curvature,
respectively, and $\xi_\alpha$ is given by \eqref{2D ev} below.

The result will be stated properly together with the outline of
the proof in Sec.~\ref{s: main}, cf. Theorem~\ref{th-main}. Before
that we collect in the next section the needed preliminaries,
Secs.~\ref{s: est W} and \ref{s: ev-as} are devoted to completion
of the proof.

\section{Preliminaries}  \label{sec-prelim}

\subsection{Strongly singular interactions}

As we have mentioned the character of interactions with support of
codimension two is different and more singular than in the case of
codimension one. Let us first recall well known facts about point
interactions in dimension two which illuminate how our
curve-supported potential behaves in the transverse plane to
$\Gamma$. Consider a single point interactions placed at $y\in
\R^2$. The corresponding Hamiltonian is constructed as a
self-adjoint extension of the symmetric operator
$-\dot\Delta\,:=\,-\Delta \upharpoonright
_{C_0^\infty(\R^2\setminus \{y\})}$, i.e. the restriction of
$-\Delta\, :\, W^{2,2}(\R^2) \to L^2 (\R^2)$ to the set
$C_0^\infty (\R^2 \setminus \{y\})$. Functions from the domain of
the adjoint of $-\dot\Delta$ admit a logarithmic singularity at
the point $y$, in its vicinity they can be written as $f(x) =
-\Xi(f) \ln|x-y| + \Omega(f) + \Oo(|x-y|)$. Self-adjoint
extensions are then characterized by a parameter
$\alpha\in\R\cup\{\infty\}$ being characterized by the boundary
condition
 \begin{equation} \label{2Dbc}
2\pi \alpha \Xi(f)= \Omega(f)\,,
 \end{equation}
 which in the case $\alpha=\infty$ is a just shorthand for $\Xi(f)=0$. With the exception of this case, each extension has a single negative eigenvalue equal to
 \begin{equation} \label{2D ev}
\xi_\alpha = -4\,\mathrm{e}^{2(-2\pi\alpha +\psi(1))}\,,
 \end{equation}
where $\psi$ is the digamma function. In the following we will use
notation $f\in \mathrm{bc}(\alpha , \Gamma )$ for a a function
$f\in L^{2}(\R^2)$ satisfying (\ref{2Dbc}). We refer to
\cite[Chap.~I.5]{AGHH} for these and other facts concerning
two-dimensional point interactions.

\subsection{Geometry of the potential support and its neighborhood}

\emph{Geometry of $\Gamma.\:$} Let $\Gamma $ be a finite
non-closed $C^4 $ smooth curve in $\R^3$ of the length $L$. In
addition, we suppose that $\Gamma $ has no self-intersections.
Without loss of generality we may assume  that it is parameterized
by its arc length, and  we keep the notation $\Gamma \,:\,
I\rightarrow \R^3$, $\:I :=(0\,,L)$, for the corresponding
function. Furthermore, we assume that the curve has \emph{regular
ends}, i.e. there exists $d_0\geq 0$ such that for any $d\in
[0\,,d_0]$ the curve $\Gamma $ admits a regular extension
$\Gamma^\mathrm{ex}_d$. By this we mean that
$\Gamma^\mathrm{ex}_d$ is the graph of a $C^4 $ smooth function
$\Gamma^\mathrm{ex}_d\,:\, I_d \to \R^3$ with $I_d :=(-d\,,L+d)$
and the restriction of $\Gamma^\mathrm{ex}_d$ to $I$ coincides
with the original curve, in other words, $\Gamma^\mathrm{ex}_0
=\Gamma$. Finally, we also assume that the extended curve
$\Gamma_d^\mathrm{ex}$ admits the  global Frenet frame, i.e. the
triple of vectors $(t(s),b(s),n(s))$ for any $s \in I_d$.
\begin{remark} \label{piecewise}
The tangetial, binormal and normal vectors are, by assumption,
$C^{2}$ functions mapping from $I_d$ to $\R^3$. The assumption
about global existence of the (unique) Frenet frame is satisfied
provided $\frac{\D}{\D s}\Gamma^\mathrm{ex}_d(s) \neq 0$ for all
$s\in I_d$. Let us emphasize, however, that we adopt this
hypothesis for simplicity only. Our main result requires only
piecewise  existence of the Frenet frame from which a global
coordinate transformation we need can be constructed rotating the
coordinate frame on a fixed angle if necessary. A discussion how
this can be done curves with straight segment can be found in
\cite{EK4}.
\end{remark}

\noindent The extended curve $\Gamma^\mathrm{ex}_d$ is, uniquely
up to Euclidean transformations, determined by its curvature
$\gamma_d^\mathrm{ex}$ and torsion $\tau_d^\mathrm{ex}$. The same
quantities for the original curve $\Gamma$ are respectively
denoted as $\gamma $ and $\tau$

\medskip

\noindent \emph{`Thin' neighborhoods of $\Gamma^\mathrm{ex}_d$.}
Consider a disc of radius $d$ parametrized by polar coordinates,
$\mathcal{B}_d: = \{r\in [0\,,d )\,, \varphi \in [0\,, 2\pi )\}$.
Using it, we define the cylindrical set $\mathcal{D}^\mathrm{ex}_d
:= I_d \times \mathcal{B}_d$ and the map $\phi_d\,:\,
\mathcal{D}_d^\mathrm{ex} \rightarrow \R^3$
$$
\phi_d (s\,, r\,,\varphi )= \Gamma^\mathrm{ex}_d (s) - r[n(s)\cos
(\varphi -\beta (s)) +b(s)\sin (\varphi -\beta (s))]\,,
$$
where the function $\beta$ will be specified latter. For $d$ small
enough the function $\phi_d$ is injective and its image
determines a tubular neighborhood $\Omega_d $ of
$\Gamma_d^\mathrm{ex}$.

The geometry of $\Omega_d$ can be described in terms of the metric
tensor written in the matrix form as
$$
(g_{ij})= \left( \begin{array}{ccc}
  h^2+r^2\zeta^2 & 0 & r^2\zeta  \\
  0 & 1 & 0 \\
  r^2\zeta  & 0 & r^2
\end{array} \right)\,,
$$
where $\zeta := \tau -\beta_{,s}$ and $h:=1+r\gamma \cos (\varphi
-\beta)$; we employ the shorthand $\beta_{,s}$ for the derivative
of $\beta$ with respect to the variable $s$. Choosing
$\beta_{,s}=\tau$ we can achieve that the metric tensor becomes
diagonal, $g_{ij}= \mathrm{diag} (h^2,1,r^2)$. This means we
choose what is usually called a \emph{Tang frame}, a coordinate
system which rotates with respect to the Frenet triple with the
angular velocity equal to the curve torsion.

The volume element of $\Omega_d$ can be expressed in the
coordinates $q\equiv (q_1,q_2,q_3)=(s,r,\varphi)$  as
$\mathrm{d}\Omega_d = g^{1/2}\mathrm{d}q$ where $g:=  \det
g_{ij}$. The following elementary inequality will be useful in the
further discussion,
\begin{equation}\label{eq-ineqh}
  |h-1|\leq d \max \gamma\,.
\end{equation}

\medskip

\noindent  \emph{Shifted curves.} Keeping  in mind a latter
purpose we define now a family of `shifted' curves
$\tilde{\Gamma}_d^\mathrm{ex} (\rho)$ located in the distance
$\rho\in(0,d]$ from $\Gamma^\mathrm{ex}_d$. Using the Frenet frame
we define $\tilde{\Gamma}^\mathrm{ex}_d(\rho)$ as graph of the
function
$$
\tilde{\Gamma}_d^\mathrm{ex} +\eta_n n+\eta_b b\,:\, (-d\,, L+d)
\to \R^3 \,,\quad \sqrt{|\eta_n|^2 + |\eta_b|^2}= \rho\,.
$$
Following the above introduced convention we use the symbol
$\tilde{\Gamma}(\rho)= \tilde{\Gamma}^\mathrm{ex}_0(\rho)$ for the
curves shifted with respect to the original $\Gamma$. Although we
do not mark it explicitly, one has to keep in mind that a shifted
curve depends  not only on the distance $\rho$ but also on the
angular variable encoded in the parameters $\eta_n,\,\eta_b$.

\medskip

\noindent Let us also list some \emph{notation} we are going to
use:

\medskip

\noindent $\bullet$ Let $\mathcal{A}\subset \R^3$ be an open set.
We use the abbreviation $( \cdot , \cdot)_{\mathcal{A}}$ for the
scalar product $( \cdot , \cdot)_{L^2(\mathcal{A}, \mathrm{d}x)}$.
If $\mathcal{A}=\R^3$ we shortly write $( \cdot , \cdot) = ( \cdot
, \cdot)_{L^2(\R^3, \mathrm{d}x)}$.

\smallskip

\noindent $\bullet$ We denote by $\mathcal{D}^\mathrm{ex}_d= I_d
\times \mathcal{B}_d$ the tubular neighborhood of the extended
curve $\Gamma^\mathrm{ex}_d$, and similarly, $\mathcal{D}_d=
I\times \mathcal{B}_d$ corresponds to the original curve $\Gamma
$.

\smallskip

\noindent $\bullet$ Given a self-adjoint operator $A$, we denote
by $\lambda_j (A)$ its $j$th eigenvalue.

\subsection{Definition of Hamiltonian and the Birman-Schwinger principle}

\emph{Boundary conditions.} The definition of the singular
Schr\"odinger operator presented below is a summary of the
discussion provided in~\cite{EK4} which we include the make this
article self-contained; we refer to the mentioned paper for more
details. Suppose given a function $f\in W^{2,2}_{\mathrm{loc}}
(\R^3\setminus\Gamma)$, its restriction to $\tilde{\Gamma }(\rho
)$ is well defined as a distribution from $D'(0,L)$ which we
denote as $f\!\upharpoonright_{\tilde{\Gamma }(\rho)}$.
Furthermore, we assume that the following limits
\begin{eqnarray}\label{eq-bc1}
 &&  \Xi(f)(s): =- \lim_{\rho \to 0 } \frac{1}{\ln \rho }
   f \upharpoonright _{\tilde{\Gamma }(\rho
  )}(s)\,,
\\ \label{eq-bc2} && \Omega (f) (s):=      \lim_{\rho \to 0 } \left[
f \upharpoonright _{\tilde{\Gamma } (\rho
  )}(s) + \Xi(f)(s)\ln  \rho \right]\,
\end{eqnarray}
exist a.e. in $(0\,,L)$. We write $f\in
\mathrm{bc}(\alpha,\Gamma)$ if a $f\in W^{2,2}_{\mathrm{loc}}(\R^3
\setminus \Gamma  )$ satisfies
\begin{equation}\label{eq-bc}
2\pi \alpha \Xi (f)=\Omega (f)\,.
\end{equation}
Equation (\ref{eq-bc}) plays the role of generalized boundary
conditions, \cite{BG85}. Then we define the set
$$
D(H_{\alpha,\Gamma}):= \{f\in W^{2,2}_{\mathrm{loc}} (\R^3
\setminus \Gamma )\cap L^2 \,:\, f\in
\mathrm{bc}(\alpha,\Gamma)\}\,
$$
and  the operator $H_{\alpha,\Gamma}\,:\, D(H_{\alpha ,\Gamma
})\to L^2$ which acts as
$$
H_{\alpha,\Gamma}f (x)=-\Delta f(x)\,,\quad x\in \R^3 \setminus
\Gamma \,.
$$
  This operator is self-adjoint,~cf.~\cite[Thm.~2.3]{EK4} and
defines the Hamiltonian  we are going to study.

\medskip

\noindent \emph{Free resolvent kernel.} We start with the
resolvent of the `free' Laplacian, $-\Delta \,:\, W^{2,2}(\R^3)\to
L^2$. It is well known that $R(-\kappa^2)=(-\Delta +\kappa^2
)^{-1}$ is for any $\kappa >0$ an integral operator with the
kernel
\begin{equation}\label{eq-kernel}
  G(\kappa; x,y) = \frac{1}{4\pi }\frac{\mathrm{e}^{-\kappa |x-y|}}{|x-y|}\,.
\end{equation}
In the following we also use the notation $G(\kappa; \rho)=
\frac{1}{4\pi}\frac{\mathrm{e}^{-\kappa \rho}}{\rho}$ where $\rho
>0$. It is  well known, see for example~\cite{BEKS}, that the operator $R(-\kappa^2)$ admits the embedding into $L^2 (I )$.
To be more precise, consider an $\omega\in L^2 (I)$ and define
\begin{equation}\label{eq-convol}
f= f_\kappa ^\omega = G(\kappa ) \omega \ast  \delta_\Gamma
=\frac{1}{4\pi } \int_{I} \frac{\mathrm{e}^{-\kappa |\cdot \,
-\Gamma (s)|}} {|\cdot \, -\Gamma (s)|}\,\omega (s)\mathrm{d}s\,.
\end{equation}
Then $f\in W^{2,2}_{\mathrm{loc}} (\R^3 \setminus \Gamma ) \cap
L^2$ and  the limit $\Omega(f)$ defines one-parameter family of
operators $\R_+ \ni \kappa \mapsto Q_{-\kappa^2}\,:\, L^2 (I) \to
L^2 (I)$ acting as
\begin{equation}\label{eq-defQ}
Q_{-\kappa^2} \omega = \Omega (f^\omega )\,,\quad \omega \in L^2
(I)\,,
\end{equation}
cf.~\cite{EK2} for more details.

\medskip

\noindent \emph{The Birman-Schwinger principle.} The stability of
the essential spectrum,
$$
\sigma _{\mathrm{ess}} (H_{\alpha,\Gamma}) = \sigma
_{\mathrm{ess}}  (-\Delta )= [0\,,\infty)\,,
$$
is a consequence of the fact that the singular potential in our
model is supported by a compact set. Using the results of
\cite{Po4} we can formulate conditions for the existence of
discrete eigenvalues. Specifically, we have,
$$
\lambda = -\kappa ^2  \in \sigma _{\mathrm{d}} (H_{\alpha ,\Gamma
}) \,\,\Leftrightarrow \,\,\ker (Q_{-\kappa^2} -\alpha)\ne
\emptyset
$$
and the multiplicity of $\lambda $ is equal to $\dim
\ker(Q_{-\kappa^2} -\alpha)$. Moreover, the corresponding
eigenspaces are spanned by the functions
\begin{equation}\label{eq-efun}
f= G(\kappa ) \omega  \ast \delta_\Gamma\,,\quad \omega \in \ker
(Q_{-\kappa^2} -\alpha )\,.
\end{equation}

\section{Main result and the proof scheme} \label{s: main}

Now we are in position to state the main result of this paper.
\begin{theorem} \label{th-main}
Let $H_{\alpha,\Gamma}$ be the singular Schr\"odinger operator
defined by means of the boundary conditions \eqref{eq-bc}
corresponding to a finite, non-closed $C^4$ smooth curve with
regular ends which has the global Frenet frame. \\  (i) The
cardinality of the discrete spectrum admits the same asymptotics
as in the case of the closed curved,~i.e.
\begin{equation}\label{eq-asmpN}
\sharp \sigma_{\mathrm{d}} (H_{\alpha ,\Gamma
})=\frac{L}{\pi}\,\zeta_\alpha (1+\mathcal{O}(\mathrm{e}^{\alpha
\pi }))\,,
\end{equation}
where
$$
\zeta_\alpha = (-\xi_\alpha )^{1/2}\,.
$$
(ii) Furthermore, the $j$th eigenvalue of $H_{\alpha,\Gamma}$ has
the expansion
\begin{equation}\label{eq-mainasym}
\lambda_j (H_{\alpha,\Gamma})= \xi_\alpha +\lambda _j
(S)+\mathcal{O}(\mathrm{e}^{\pi \alpha })\quad \mathrm{for}\quad
\alpha \to -\infty\,,
\end{equation}
where $\lambda _j(S)$ stands for the $j$th eigenvalue of the
operator
\begin{equation}\label{eq-defS}
S=-\frac{d^2}{ds^2 }- \frac14 \gamma^2(s)\,:\, D(S)\rightarrow L^2
(0, L)
\end{equation}
with the domain $D(S):= W^{1,2}_0 (0,L)\cap W^{2,2} (0,L)$.
\end{theorem}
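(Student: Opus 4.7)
The plan is to combine the Birman--Schwinger criterion from the previous section with an asymptotic expansion of the integral operator $Q_{-\kappa^2}$ on $L^2(I)$: the transverse 2D point-interaction spectrum $\xi_\alpha$ should appear as the leading contribution coming from the logarithmic diagonal of the kernel, while the one-dimensional operator $S$ should emerge as the subleading tangential correction. Throughout I would parametrize $\kappa^2 = -\xi_\alpha - \mu = \kappa_\alpha^2 - \mu$ with $\kappa_\alpha := \zeta_\alpha$, so that \eqref{eq-mainasym} becomes the statement: the values $\mu$ for which $\alpha \in \sigma(Q_{-\kappa_\alpha^2 + \mu})$ converge, as $\alpha \to -\infty$, to the eigenvalues $\lambda_j(S)$ up to a remainder of order $\mathrm{e}^{\pi\alpha}$.

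Next, starting from the kernel \eqref{eq-kernel} and the definition \eqref{eq-defQ}, I would expand the restriction of $G(\kappa;\cdot,\cdot)$ to $\Gamma\times\Gamma$ using the chord-length identity $|\Gamma(s)-\Gamma(s')|^2 = (s-s')^2 - \tfrac{1}{12}\gamma^2(s')(s-s')^4 + \mathcal{O}((s-s')^6)$. Substituted into $\mathrm{e}^{-\kappa r}/(4\pi r)$ and then into the limit \eqref{eq-bc2}, this produces a logarithmic diagonal singularity whose coefficient reproduces precisely the 2D point-interaction boundary condition \eqref{2Dbc}; the self-consistent tuning $\kappa = \kappa_\alpha$ is what isolates the leading level $\xi_\alpha$. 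After removing the diagonal, the remaining smooth operator is expected to behave, when tested against functions $\omega$ concentrated on scales relevant to $S$, as the 1D resolvent $(-\frac{\D^2}{\D s^2} - \tfrac14 \gamma^2(s) + \mu)^{-1}$ up to errors of order $\mathrm{e}^{\pi\alpha}$. The coefficient $\tfrac14$ in front of $\gamma^2$ is dictated by the fourth-order term in the chord expansion above, in full analogy with \cite{EK4, EP13}.

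The principal difficulty, unique to the open-curve case, is producing Dirichlet data at $s=0$ and $s=L$ in the effective operator $S$. My plan is to sandwich $H_{\alpha,\Gamma}$ between two comparison operators in the tubular neighborhood $\Omega_d$, adapting the bracketing strategy of \cite{EP13} to codimension two. For the upper bound one imposes Dirichlet conditions on the transverse discs through the endpoints of $\Gamma$, which at the level of the effective one-dimensional problem forces $\omega(0)=\omega(L)=0$. For the lower bound one first extends the curve to $\Gamma_d^{\mathrm{ex}}$ and solves on the enlarged neighborhood with an auxiliary Neumann cut outside of $\Gamma$; the quantitative step is then to exploit the exponential localization of the integral representation \eqref{eq-efun} at the scale $\kappa_\alpha^{-1}$ to show that the piece of $\omega$ supported on $(-d,0)\cup(L,L+d)$ contributes at order $\kappa_\alpha^{-1/2} \sim \mathrm{e}^{\pi\alpha}$. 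Matching the two bounds at precisely this order is the place where I expect the main obstacle to lie.

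Part (i) would then follow from (ii) combined with the Weyl count $\#\{j : \lambda_j(S) < E\} = \tfrac{L}{\pi}\sqrt{E}\,(1+o(1))$ for the Dirichlet operator $S$ on $(0,L)$: the number of negative eigenvalues of $H_{\alpha,\Gamma}$ equals the number of indices $j$ with $\xi_\alpha + \lambda_j(S) < 0$, that is $\lambda_j(S) < \zeta_\alpha^2$, from which \eqref{eq-asmpN} is immediate once the control of the remainder from (ii) is uniform in $j$ up to the threshold.
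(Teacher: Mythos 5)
Your plan correctly identifies the two sources of the asymptotics (the logarithmic transverse singularity giving $\xi_\alpha$, the chord-length expansion giving $-\tfrac14\gamma^2$) and correctly locates the main obstacle at the endpoints, but the decisive step --- the lower bound that upgrades the Neumann effective operator to the Dirichlet one --- is missing, and the mechanism you sketch for it would fail. A ``Neumann cut outside of $\Gamma$'' on the enlarged neighborhood produces, after straightening, an effective one-dimensional operator with Neumann ends, whose eigenvalues differ from $\lambda_j(S)$ by $O(1)$ rather than $O(\e^{\pi\alpha})$ no matter how small $d$ is; this is exactly the obstruction stated at the beginning of Sec.~\ref{s: main}, where bracketing only yields $\lambda_j(S^N)\le c_j\le\lambda_j(S)$. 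Also, the density $\omega_j$ in \eqref{eq-efun} is supported on $I=(0,L)$ only, so there is no ``piece of $\omega$ on $(-d,0)\cup(L,L+d)$'' to estimate. The paper's resolution is not a form bracketing at all: it introduces the operator $W$ of \eqref{eq-Wdef} with Dirichlet conditions on all of $\partial\mathcal{D}^\mathrm{ex}_d$ and proves $\lambda_j(W)\le\lambda_j(H_{\alpha,\Gamma})+\mathcal{O}(d^{-18}\e^{-C/d})$ by minimax, using as trial functions the cut-offs $f_j^D=\eta f_j^g$ of the \emph{true} eigenfunctions. The exponential decay $f_j\sim\e^{-\kappa_j|x-\Gamma|}$ with $\kappa_j d\sim C/d$ makes the trimming exponentially cheap, but converting that into an eigenvalue inequality requires three quantitative inputs absent from your plan: the approximate orthogonality of the $f_j^D$ (Lemma~\ref{le-ort}), the control of $Wf_j^D$ on the trimmed shell (Lemma~\ref{le-remain}), and, crucially, the lower bound $\|f_j^D\|^2\ge c\,d^8$ (Lemma~\ref{le-lower}) extracted from the logarithmic singularity of $f_j$ at $\Gamma$, without which the exponentially small remainders cannot be compared to the norms of the trial functions. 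One then still needs $\lambda_j(W)\ge\xi_\alpha+\lambda_j(S)+\mathcal{O}(d)$, obtained via a transverse direct-integral decomposition of $T_\alpha$ together with the elementary fact that the Dirichlet operator on $(-d,L+d)$ has eigenvalues within $O(d)$ of those on $(0,L)$.

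Two secondary points. Your derivation of (i) from (ii) plus Weyl counting requires the remainder in \eqref{eq-mainasym} to be uniform in $j$ up to the threshold $\lambda_j(S)\sim\zeta_\alpha^2$, which a fixed-$j$ asymptotics does not provide; the paper obtains (i) directly from the two-sided bracketing \eqref{eq-asyp}--\eqref{eq-evestim}, since $S$ and $S^N$ share the same counting asymptotics to the needed order. And your opening strategy of expanding $Q_{-\kappa^2}$ directly is a legitimate alternative in spirit to the closed-curve analysis, but as written it gives no mechanism by which the Dirichlet condition at $s=0,L$ would emerge from the kernel expansion --- which is precisely the new content of the open-arc case --- so it cannot substitute for the missing lower bound.
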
 
\begin{remark}
\rm{One may also ask a question on a varying interaction. Of
course, there is not a unique answer for a general case, however,
admitting a varying coupling $\tilde{\alpha} = \alpha +\omega (s)
$, where $\omega (s) \in C^2_0 (\R)$  instead of $\alpha $, we may
expect the asymtotics
\begin{equation}
\lambda_j (H_{\tilde{\alpha},\Gamma})= \lambda _j
(\tilde{S})+\mathcal{O}(\mathrm{e}^{\pi \alpha })\quad
\mathrm{for}\quad \alpha \to -\infty\,,
\end{equation}
} where $$ \tilde{S}= -\frac{d^2 }{ds^2}-\frac{\gamma ^2 (s)
}{4}-4\e^{2(-2\pi \tilde{\alpha}(s)+\psi (1) )}\,.
$$
However, the model requires detailed analysis and further
generalizations of the methods used in this paper.
\end{remark}
\subsection{The proof scheme}

\emph{Dirichlet--Neumann bracketing.} The asymptotics
(\ref{eq-mainasym}) can not be obtained directly from the
Dirichlet--Neumann bracketing on tubular neighborhoods of the
curve $\Gamma$ in the way analogous to the loop case \cite{Ex08,
EK4}, because in the lower bound the operator $S$ would be
replaced by the operator $S^N$ acting as (\ref{eq-defS}) but
Neumann boundary conditions. Nevertheless, this technique is
powerful enough to yield claim (i) of the theorem.  More
specifically, using the Dirichlet--Neumann bracketing and
repeating the argument of ~\cite{EK4} we get
\begin{equation}\label{eq-asyp}
 \lambda_j  (H_{\alpha,\Gamma})= -  \kappa_j (\alpha )^2=\xi_\alpha +c_j +
  \mathcal{O}(\mathrm{e}^{\pi \alpha })\,,
\end{equation}
where the numbers $c_j$ satisfy the inequalities
\begin{equation} \label{eq-evestim}
\lambda_j (S^N)\leq c_j \leq \lambda_j (S)\,,
\end{equation}
and $S^N\,:\, D(S^N)=\{f\in W^{2,2}(0,L)\,:\, f'(0)=f'(L)=0 \}\to
L^2(0,L)$; recall that $S^N$ has the same differential symbol as
$S$. Note that the second inequality of (\ref{eq-evestim})
reproduces a right upper bound. In order to prove
Theorem~\ref{th-main} we obviously have to replace the first
inequality by a better lower bound. The remaining part of the
paper is devoted to this problem.

\medskip

\noindent \emph{A few ideas.} Let us mention three concepts we are
going to use in the proof of Theorem~\ref{th-main}. The first is
the observation that the properties of the discrete spectrum are
reflected in the behavior of the eigenfunctions in the vicinity of
the curve $\Gamma$. Specifically, let $f_j$ stand for the $j$th
eigenfunction of $H_{\alpha,\Gamma}$ corresponding to $\lambda_j
(H_{\alpha,\Gamma})$. Then we have
$$ 
\lambda_j (H_{\alpha,\Gamma})=\frac{(H_{\alpha,\Gamma}f_j,f_j )}
{\|f_j \|^2} =\frac{(-\Delta_{\alpha ,\Gamma }f_j,f_j
)_{\Omega_d}} {\|f_j \|_{\Omega_d}^2} \,,
$$ 
where the second one of the equalities follows from the natural
embedding $L^2(\R^3) \supset L^2 (\Omega_d)$ in combination with
the fact that $f_j$ satisfies away from $\Gamma$ the appropriate
differential equation: the symbol $-\Delta_{\alpha,\Gamma }$ is
understood not as a self-adjoint operator, rather as the
differential expression, $-(\Delta_{\alpha,\Gamma}f)(x)=-(\Delta
f)(x)$ for $x\neq \Gamma $ and $f\in W^{2,2}_\mathrm{loc}
(\Omega_d)$.

The second idea  is to employ a suitable `straightening'
transformation which allows us to translate the geometry of the
problem into the coefficients of the operator. In particular, we
obtain an effective potential expressed in terms of the curvature
of $\Gamma$ and its derivatives. To this aim we introduce two
unitary transformations,
$$
Uf = f\circ \phi_d \,:\, L^2 (\Omega_d) \rightarrow L^2
(\mathcal{D}^\mathrm{ex}_d ,g^{1/2} \mathrm{d}q )
$$
and the other one removing the weight factor in the inner product,
$$
\hat{U}f =  g^{1/4}f\,,\quad  \hat{U}\,:\, L^2
(\mathcal{D}^\mathrm{ex}_d ,g^{1/2} \mathrm{d}q ) \rightarrow L^2
( \mathcal{D}_d^\mathrm{ex}, \mathrm{d}q )\,;
$$
we combine them denoting
\begin{equation}\label{eq-deffg}
f^g :=\hat{U} Uf\,.
\end{equation}
Since $f_j$ is by assumption the $j$th eigenfunction of
$H_{\alpha,\Gamma}$, in view of (\ref{eq-bc}) we have
$g^{-1/4}f_j^g \in\mathrm{bc} (\alpha,\Gamma)$. After a
straightforward calculation~\cite{EK4}, we get
\begin{equation} \label{eq-estimHam}
(-\Delta_{\alpha,\Gamma }f_j ,f_j )_{\Omega_d}= \left(
(-\partial_s h^{-2}\partial_s +T_\alpha +V )f^g_j , f^g_j
\right)_{\mathcal{D}_d^\mathrm{ex}}
\,,
\end{equation}
where $T_\alpha $ is defined by the differential expression
\begin{equation}\label{eq-defT}
T_\alpha =-\partial_r^2 - r^{-2}\partial^2_\varphi
-\frac{1}{4}r^{-2}\,
\end{equation}
and
\begin{equation}\label{eq-defV}
V=-\frac{\gamma
^2}{4h}+\frac{h_{,ss}}{2h^3}-\frac{5(h_{,s})^2}{4h^4}\,.
\end{equation}
 Note that the above described idea was used, for example,
in the context of waveguides, cf.~\cite{DE, KA}. 

 Finally, the third concept is to use is an
approximation $f_j\upharpoonright_{\Omega_d }$ by functions
vanishing on $\partial \Omega_d$. To explain why it is possible
note that in view of $f_j \in \mathrm{bc}(\alpha,\Gamma)$ the
eigenfunctions have a logarithmic singularity at $\Gamma $,
however, away from the curve they decay rapidly:
relations~(\ref{eq-kernel}) and (\ref{eq-efun}) show that $f_j (x)
\sim \mathrm{e}^{-\kappa_j (\alpha )|x-\Gamma|}$ holds for $x\in
\Omega_d \setminus \Gamma $, where $\kappa _j (\alpha
):=\sqrt{-\lambda_j(\alpha )}$. It shows, in particular, that
$f_j$ `accumulates' at the curve $\Gamma $ as $\alpha \to-\infty$.
This suggests that one might get a good estimate replacing
$f_j\upharpoonright_{ \Omega_d }$ by   suitable functions
vanishing on $\partial \Omega_d$ and relate simultaneously the
transverse size of $\Omega_d$ to the parameter $\alpha $. To this
aim we assume in the following that
\begin{equation}\label{eq-defd}
d=d (\alpha )=\mathrm{e}^{\pi \alpha }\,.
\end{equation}

\medskip

\noindent \emph{Proof steps.} We are going to use the described ideas in the following way: \\[.2em]
$\bullet$ We construct a self-adjoint operator $W$ in
$L^2(\mathcal{D}_d^\mathrm{ex})$ acting as
\begin{equation}\label{eq-Wdef}
W=-\partial_s h^{-2}\partial_s+T_\alpha +V\,:\, D(W) \rightarrow
L^2 (\mathcal{D}^\mathrm{ex}_d)
\end{equation}
with the domain $D(W)$ consisting of the functions that satisfies
Dirichlet boundary conditions on $\partial
\mathcal{D}^{\mathrm{ex}}_d$ and $g^{-1/4} f\in
\mathrm{bc}(\alpha,\Gamma)$. Our aim is to find a lower bound for
eigenvalues of $H_{\alpha,\Gamma }$ in terms of $W$. Specifically,
we are going to show that the following asymptotic inequality
\begin{equation}\label{eq-evWmain}
\lambda_j (W)\leq \lambda_j (H_{\alpha,\Gamma})
+\mathcal{O}(d^{-18}\mathrm{e}^{-C/d})\,
\end{equation}
holds.

\smallskip

\noindent $\bullet$ The next step is to recover a lower bounds for
$\lambda_j (W) $. Using a variational argument we prove that
\begin{equation}  \label{eq-lbW}
\lambda_j (W ) \geq  \xi_\alpha +\lambda_j (S)+\mathcal{O}(d)\,.
\end{equation}
Combining it with (\ref{eq-evWmain}) and (\ref{eq-evestim}) we
obtain the claim of~Theorem~\ref{th-main}.

\section{Approximating $f_j$ by Dirichlet functions}
\label{s: Dirichlet approx}

For the sake of brevity we shall speak of the functions $f\in
D(W)$ involved in the first step as of Dirichlet functions; we are
sure that the reader would not confuse them with other objects
bearing in mathematics the same name.

We keep the notation $\lambda_j (H_{\alpha,\Gamma})=
-\kappa_j(\alpha)^2$ for the eigenvalues of our original operator.
To investigate the behavior of the corresponding eigenfunction
$f_j$ we employ the expression
$$
f_j=G(\kappa_j (\alpha ))\omega_j \ast \delta_\Gamma \,,
$$
where
\begin{equation}\label{eq-BS3}
(Q_{-\kappa _j(\alpha )^2}-\alpha )\omega_j =0\,,
\end{equation}
following from~(\ref{eq-efun}). For brevity again we shall write
in the following shortly $f_j=G(\kappa_j(\alpha))\omega_j $;
without loss of generality we may assume that $\omega _j$ is
normalized function, i.e.~$\|\omega_j \|_{I}=1$. Combining
(\ref{eq-defd}) and (\ref{eq-asyp}) we get
\begin{equation}\label{eq-asymsought}
  \kappa_j (\alpha )d (\alpha )= Cd(\alpha )^{-1} + \mathcal{O}(d(\alpha )^{3})
\end{equation}
with the constant $C:=2\mathrm{e}^{2\psi(1)} $.

\subsection{Approximate orthogonality of Dirichlet functions}

Now we approximate the eigenfunctions $f_j$ by suitable Dirichlet
functions. We set $f_j^D=\eta f_j^g $, where $\eta \in C^{\infty
}_0 (\mathcal{D}^\mathrm{ex}_d)$ is a positive function such that
$$
\eta (x)=1 \quad \mathrm{for} \;\:  x\in
\mathcal{D}^\mathrm{ex}_{d/2}
$$
and $f^g$ is the `straightened' function defined
by~(\ref{eq-deffg}).

\begin{lemma} \label{le-ort}
Let $d$ be given by (\ref{eq-defd}), then the following
asymptotics,
\begin{equation}\label{eq-ort}
  (f^D_j, f^D_k)_{ \mathcal{D}^\mathrm{ex}_d}=
  \|f_j^D \|^2_{\mathcal{D}^\mathrm{ex}_d}\delta_{jk} + R(d)\quad
  \mathit{as} \;\; \alpha \to -\infty\,,
\end{equation}
holds with the remainder term satisfying
$$
|R(d)|= \mathcal{O}(d^{-2}\mathrm{e}^{-C/d})\,.
$$
\end{lemma}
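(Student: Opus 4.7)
The plan is to exploit two facts already available in the excerpt. First, the eigenfunctions $f_j$ are pairwise orthogonal in $L^{2}(\R^{3})$ because $H_{\alpha,\Gamma}$ is self-adjoint. Second, the representation $f_j=G(\kappa_j(\alpha))\omega_j\ast\delta_\Gamma$ together with the relation $\kappa_j(\alpha)\,d(\alpha)\sim C/d(\alpha)$ from \eqref{eq-asymsought} forces $f_j$ to decay like $e^{-\kappa_j\rho}$ at distance $\rho$ from $\Gamma$, and at $\rho=d/2$ this is already of order $e^{-C/(2d)}$. Hence the bulk of $f_j$ is concentrated inside the thin tube $\Omega_{d/2}$ where $\eta=1$, so the defect produced by the cut-off and by the passage from $\R^{3}$ to the tubular neighbourhood should be exponentially small.

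To formalize this I start by using that $\hat U U\,:\,L^{2}(\Omega_d,\D x)\to L^{2}(\mathcal{D}_d^{\mathrm{ex}},\D q)$ is unitary, which allows me to rewrite
$$
(f_j^D,f_k^D)_{\mathcal{D}_d^{\mathrm{ex}}}=\int_{\Omega_d}\tilde\eta^{\,2}\,f_j\overline{f_k}\,\D x,\qquad \tilde\eta:=\eta\circ\phi_d^{-1},
$$
which, because $\tilde\eta=1$ on $\Omega_{d/2}$ and $\tilde\eta=0$ outside $\Omega_d$, can be further decomposed as
$$
(f_j,f_k)_{\R^{3}}-(f_j,f_k)_{\R^{3}\setminus\Omega_d}-\int_{\Omega_d\setminus\Omega_{d/2}}(1-\tilde\eta^{\,2})f_j\overline{f_k}\,\D x.
$$
For $j=k$ the assertion \eqref{eq-ort} is a tautology with $R=0$, so only the off-diagonal case is substantial; there the first summand vanishes by orthogonality of eigenfunctions and the Schwarz inequality reduces the problem to a uniform bound of the form $\|f_j\|_{A}\le c\,d^{-1}e^{-C/(2d)}$ on any measurable set $A\subset\R^{3}$ lying at distance $\ge d/2$ from $\Gamma$.

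Such a uniform bound is obtained from the convolution formula and the Schwarz inequality in the parameter $s$: one has
$$
|f_j(x)|^{2}\le\frac{1}{(4\pi)^{2}}\int_I\frac{e^{-2\kappa_j(\alpha)|x-\Gamma(s)|}}{|x-\Gamma(s)|^{2}}\,\D s
$$
since $\|\omega_j\|_I=1$. For $x$ at distance $r_x\ge d/2$ from $\Gamma$ the integrand is decreasing in $|x-\Gamma(s)|$ (because $\kappa_j r_x\gg 1$), so it is majorized by $L\,e^{-2\kappa_j r_x}/r_x^{2}$. Passing to the Tang coordinates on $A=\Omega_d\setminus\Omega_{d/2}$ with the volume element controlled by \eqref{eq-ineqh}, integrating over the radial slab $d/2\le r\le d$, and performing a parallel argument over the exterior $\R^{3}\setminus\Omega_d$, yields the required decay. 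The main obstacle will be purely bookkeeping: one has to keep explicit control of all polynomial-in-$d$ prefactors arising from the Jacobian $g^{1/2}=hr$ and from the integration in $s$, and to verify that the kernel singularity at $\rho=0$ never interferes, which is granted because $A$ stays a distance $d/2$ away from $\Gamma$. The exponential smallness, in contrast, is automatic once the lower bound on $\kappa_j\,\mathrm{dist}(\,\cdot\,,\Gamma)$ is in place.
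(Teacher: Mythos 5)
Your proposal is correct and follows essentially the same route as the paper: you decompose $(f_j^D,f_k^D)_{\mathcal{D}^\mathrm{ex}_d}$ into the full-space inner product (which gives $\|f_j\|^2\delta_{jk}$ by orthogonality), the contribution of $\R^3\setminus\Omega_d$, and the cut-off shell $\Omega_d\setminus\Omega_{d/2}$, and you control the latter two via the pointwise kernel bound $|f_j(x)|\lesssim \|\omega_j\|_{L^1(I)}\,\mathrm{e}^{-\kappa_j r_x}/(4\pi r_x)$ combined with $\kappa_j d = C/d+\mathcal{O}(d^3)$, exactly as in the paper's Steps 1 and 2. The only (harmless) organizational differences are that you observe the diagonal case is trivial and estimate the exterior by a coarea-type argument in the distance to $\Gamma$, where the paper instead splits $\R^3\setminus\Omega_d$ into a ball of radius $L$ and its complement; both yield the stated $\mathcal{O}(d^{-2}\mathrm{e}^{-C/d})$ remainder.
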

\begin{proof}
We start from the self-evident statement that
$$
(f_j, f_k)= \|f_j \|^2  \delta_{jk}= (f_j , f_k)_{\Omega _d}
+(f_j, f_k)_{\R^3 \setminus \Omega_d}\,.
$$
In view of the unitarity of the straightening transformation the
first term on the right-hand side can be written as $(f_j ,
f_k)_{\Omega_d}= (f^g_j, f^g_k)_{\mathcal{D}^\mathrm{ex}_d}$ which
implies
\begin{equation}\label{eq-aux1}
(f^g_j , f^g_k)_{\mathcal{D}^\mathrm{ex}_d}= \|f_j\|^2 \delta_{jk}
-(f_j , f_k)_{\R^3 \setminus \Omega_d}\,.
\end{equation}
The remaining part of the argument can be divided into two parts:

\smallskip

\noindent \emph{Step 1. Approximating $f^g_k $ by means of
$f^D_k$.} Consider a point$x\in\Omega_d $ and denote
$x_q:=\phi_d(x)$. Combining the inequality $|x_q-\Gamma (s)| \geq
r$ with \eqref{eq-convol} and  (\ref{eq-ineqh}) we obtain
\begin{eqnarray} \nonumber
\lefteqn{\left| ( f^g_j ,f^g_k)_{\mathcal{D}^\mathrm{ex}_d} -
( f^D_j ,f^D_k)_{\mathcal{D}^\mathrm{ex}_d} \right|} \\
\nonumber  && = \int_{\mathcal{D}^\mathrm{ex}_d \setminus
\mathcal{D}^\mathrm{ex}_{d/2}} g^{1/2}(1-\eta^2)
G(\kappa_{j}(\alpha ))
\omega_j \, \overline{ G(\kappa_{k}(\alpha ))\omega_k} \,\mathrm{d}q \\
\nonumber && \le C_1 \int_{d/2}^{d} \int_{0}^{2\pi}\int_{I_d}
G(\kappa_{j}(\alpha )) \omega_j \, \overline{ G(\kappa_{k}(\alpha
)) \omega_k}
 \,r\,\mathrm{d}r \,\mathrm{d}\varphi \,\mathrm{d} s \\
\nonumber  && \le \frac{C_1}{2}\, (2d+L)\, \|\omega_j \|_{L^1 (I)}
\|\omega_k \|_{L^1 (I)} \int_{d/2}^d \frac{\mathrm{e}^{-(\kappa_j
(\alpha )+\kappa_k(\alpha ))r }}{r}\,
   \mathrm{d}r \\  \label{eq-estim3} && \le \frac{C_1}{2}\, L^2(2d+L)\,
\mathrm{e}^{-C/d}
\end{eqnarray}
with some constant $C_1>0$. The last estimate comes from
(\ref{eq-asymsought}) and Schwartz inequality which gives
$\|\omega_j \|_{L^1 (I)}\leq  L\|\omega_j\|_I=L$ for any
$j\in1,\dots,N$; we have also used here $|I_d|=2d+L$. Combining
(\ref{eq-aux1}) and (\ref{eq-estim3}) we get
\begin{equation}\label{eq-1and2version2}
  (f_j^D, f_k^D)_{\mathcal{D}^\mathrm{ex}_d}=
  \|f_j^D\|^2_{\mathcal{D}^\mathrm{ex}_d} \delta_{jk}
  +\|f_j\|^2_{\R^3 \setminus \Omega_d }\delta_{jk} -(f_j,f_k)_{\R^3\setminus\Omega_d}
   + R_{1}(d)
\end{equation}
with the remainder term satisfying
$$
|R_{1}(d)| = \mathcal{O}(\mathrm{e}^{-C/d})\,;
$$
it remains to estimate the parts of (\ref{eq-1and2version2})
referring to $L^2 (\R^3  \setminus \Omega_d )$.

\smallskip

\noindent\emph{Step 2. Estimates of $\|f_j\|_{\R^3 \setminus
\Omega_d}$.} Consider the ball $\mathcal{B}=B(\Gamma (L/2)\,, L)$
of the radius $L$ centered at $\Gamma (L/2)$, the midpoint of the
curve. For $d$ small enough we obviously have $\Omega_d \subset
\mathcal{B}$, and consequently, we can decompose the norm
$\|f_j\|_{\R^3 \setminus \Omega_d}$ as
\begin{equation}\label{eq-decomnorm}
\|f_j\|^2_{\R^3 \setminus \Omega_d}=\|f_j\|^2_{\R^3 \setminus
\mathcal{B}}+ \|f_j\|^2_{ \mathcal{B}\setminus \Omega_d}\,.
\end{equation}
Let us  introduce the spherical coordinates $(\hat{r},
\hat{\theta}, \hat{\varphi})$, where $\hat{r}$ is the radius
measuring the distance from the ball center at $\Gamma (L/2)$ and
$\hat{\theta},\hat{\varphi}$ are appropriate polar and azimuthal
angles. Employing the inequality $|x-\Gamma(s)| \geq \hat{r}-L/2$
for $x\in \R^3 \setminus \mathcal{B} $ we get by a straightforward
computation
\begin{eqnarray}\nonumber
 \lefteqn{\|f_j\|^2_{\R^3 \setminus
\mathcal{B}} = \int_{\R^3 \setminus \mathcal{B}} \left| \int_{I}
\,\frac{\mathrm{e}^{ -\kappa_j |x-\Gamma (s)| }} {4\pi |x-\Gamma
(s)|}\, \omega_j (s)\mathrm{d}s\right|^2\mathrm{d}x} \\ && \leq
\nonumber \|\omega_j\|_{L^1 (I)}^2\int_{0}^{2\pi} \int_{0}^\pi
\int_{L}^\infty \left( \frac{\hat{r}}{4\pi( \hat{r}  - L/2 )}
\right)^{2} \mathrm{e}^{-2\kappa_j (  \hat{r}- L/2) }
\mathrm{d}\hat{r} \mathrm{d}\hat{\theta }\mathrm{d}\hat{\varphi}
\\ && \leq \label{eq-auxB} \frac {L^2}{16\kappa_j }
\mathrm{e}^{-L\kappa_j}= \mathcal{O}(d^2\,
\mathrm{e}^{-CL/d^2})\,,
\end{eqnarray}
where we have used (\ref{eq-asymsought}) and $\|\omega_j
\|_{L^1(I)}\leq L$. The second norm at the right-hand side of the
decomposition (\ref{eq-decomnorm}) can estimated as
\begin{eqnarray}\nonumber
\lefteqn{\|f_j\|^2_{\mathcal{B}\setminus \Omega_d} =
\int_{\mathcal{B}\setminus \Omega_d} \left| \int_{I}\,
\frac{\mathrm{e}^{ -\kappa_j |x-\Gamma (s)| }} {4\pi |x-\Gamma
(s)|}\,\omega_j (s)\mathrm{d}s\right|^2\mathrm{d}x} \\ && \leq
\label{eq-auxA} \mathrm{vol}(\mathcal{B}\setminus \Omega_d)\,
\frac{\mathrm{e}^{-2\kappa_j d}}{(4\pi d)^2}\, \|\omega_j
\|^2_{L^1 (I)} = \mathcal{O}(d^{-2}\mathrm{e}^{-C/d})\,.
\end{eqnarray}
Combining (\ref{eq-auxB}) and (\ref{eq-auxA}) we get
$$
\|f_j \|^2_{\R^3 \setminus \Omega_d} = \mathcal{O} (d^{-2}
\mathrm{e}^{-C/d})\,,
$$
which together with the result of the first step yields the sought
claim.
\end{proof}

\subsection{Estimates for the operator $W$} \label{s: est W}

We also have to find how the `Dirichlet trimming' influences the
operator $W$ defined by \eqref{eq-Wdef}. The idea of replacing the
true `straightened' eigenfunctions $f_j^g$ by the Dirichlet
approximants is based on the fact that the contribution coming
from
$$
\breve{\mathcal{D}}_d:= \mathcal{D}^\mathrm{ex}_d\setminus
\mathcal{D}^\mathrm{ex}_{d/2}
$$
is asymptotically negligible. Note that, on the one hand, the
operator $W$ acts up to the unitary transformation $\hat{U}U$ as
$H_{\alpha,\Gamma}$ on the functions supported by
$\mathcal{D}_{d/2}^\mathrm{ex}$. On the other hand, the following
two lemmata justify the just made claim by gauging the component
coming from $\breve{\mathcal{D}}_d$.

\begin{lemma} \label{le-remain}
The asymptotical relation
\begin{equation}\label{eq-asmpW}
|(Wf^D_j ,f^D_k )|_{\breve{\mathcal{D}}_d} =
\mathcal{O}(d^{-8}\mathrm{e}^{-C/d})
\end{equation}
holds for $d$ defined by (\ref{eq-defd}) and $\alpha \to -\infty$.
\end{lemma}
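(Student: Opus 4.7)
The plan is to exploit the fact that $\breve{\mathcal{D}}_d$ lies at a positive distance from $\Gamma$, so that $f_j^g$ is smooth there and satisfies a pointwise eigenvalue equation which spares us the estimation of any second derivatives. Indeed, the combined transformation $\hat{U}U$ is unitary and intertwines $-\Delta$ with the differential expression $-\partial_s h^{-2}\partial_s + T_\alpha + V$ (cf.\ (\ref{eq-estimHam})); together with $-\Delta f_j = -\kappa_j(\alpha)^2 f_j$ on $\R^3\setminus\Gamma$ this implies the pointwise identity $Wf_j^g = -\kappa_j(\alpha)^2 f_j^g$ on $\breve{\mathcal{D}}_d$. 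Leibniz's rule then yields
$$
W(\eta f_j^g) = -\kappa_j(\alpha)^2\,\eta f_j^g + [W,\eta]\,f_j^g \quad \text{on } \breve{\mathcal{D}}_d,
$$
which splits $(Wf_j^D,f_k^D)_{\breve{\mathcal{D}}_d}$ into an ``eigenvalue'' piece $-\kappa_j(\alpha)^2(f_j^D,f_k^D)_{\breve{\mathcal{D}}_d}$ and a commutator piece involving only $f_j^g$ and $\nabla_q f_j^g$.

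For the eigenvalue piece, a pointwise bound analogous to the one used in (\ref{eq-auxA}) is sufficient: on $\breve{\mathcal{D}}_d$ the inequality $|x-\Gamma(s)|\geq r\geq d/2$ gives $|f_j(\phi_d(q))| \leq (4\pi r)^{-1}\|\omega_j\|_{L^1(I)}\mathrm{e}^{-\kappa_j(\alpha) r}$, which combined with $g^{1/4} = \mathcal{O}(d^{1/2})$, $\|\omega_j\|_{L^1(I)}\leq L$, $|\breve{\mathcal{D}}_d| = \mathcal{O}(d)$, and (\ref{eq-asymsought}) gives $(f_j^D,f_k^D)_{\breve{\mathcal{D}}_d} = \mathcal{O}(\mathrm{e}^{-C/d})$. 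Multiplication by $\kappa_j(\alpha)^2 = \mathcal{O}(d^{-4})$ bounds this piece by $\mathcal{O}(d^{-4}\mathrm{e}^{-C/d})$.

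For the commutator piece, I choose $\eta$ as a product $\eta_1(s)\eta_2(r)$ of one-variable cutoffs so that $[W,\eta]$ reduces to a first-order differential operator whose coefficients involve $\eta_{,s},\eta_{,r} = \mathcal{O}(d^{-1})$, $\eta_{,ss},\eta_{,rr} = \mathcal{O}(d^{-2})$, and the bounded metric factors $h^{-2}$ and $(h^{-2})_{,s}$; the $\partial_\varphi$-part of $T_\alpha$ and the potential $V$ commute with such an $\eta$. The required pointwise bound $|\nabla_q f_j^g| = \mathcal{O}(d^{-5/2}\mathrm{e}^{-\kappa_j(\alpha) d/2})$ is obtained by differentiating (\ref{eq-convol}) under the integral sign: the kernel estimate $|\nabla_x G(\kappa_j;\rho)| \leq (\kappa_j + \rho^{-1})\,G(\kappa_j;\rho)$ together with $\rho \geq d/2$, $\kappa_j(\alpha) = \mathcal{O}(d^{-2})$, and the chain rule applied to $f_j^g = g^{1/4}(f_j\circ\phi_d)$ using $|\partial_q g^{1/4}| = \mathcal{O}(d^{-1/2})$ and the uniformly bounded differential of $\phi_d$ do the job. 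Collecting these contributions with the volume $|\breve{\mathcal{D}}_d| = \mathcal{O}(d)$ and $\mathrm{e}^{-\kappa_j(\alpha)d} = \mathrm{e}^{-C/d + \mathcal{O}(d^3)}$ yields the bound (\ref{eq-asmpW}). The principal technical obstacle is the careful bookkeeping of the $d$-powers arising from the chain rule, the weight $g^{1/4}$, the $r^{-2}$ factor in $T_\alpha$, and the cutoff derivatives; the stated estimate $\mathcal{O}(d^{-8}\mathrm{e}^{-C/d})$ is, however, comfortably slack, so no sharpness analysis is needed.
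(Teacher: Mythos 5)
Your argument is correct and follows essentially the same route as the paper's proof: expand $W(\eta f_j^g)$ by the Leibniz rule, use the pointwise eigenvalue equation $(-\partial_s h^{-2}\partial_s+T_\alpha+V)f_j^g=\lambda_j(H_{\alpha,\Gamma})f_j^g$ on $\mathcal{D}^\mathrm{ex}_d$ for the zeroth-order term, and control the commutator terms via $|\nabla\eta|=\mathcal{O}(d^{-1})$, $|\nabla^2\eta|=\mathcal{O}(d^{-2})$ and the derivative bound on the Green's kernel $|\nabla G(\kappa;\rho)|\le\mathrm{const}\,(\kappa+\rho^{-1})G(\kappa;\rho)$ — pairing directly with $f_k^D$ instead of first applying the Schwarz inequality, as the paper does, is an inessential variation that in fact yields a slightly better power of $d$. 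The one point to tidy up (which the paper's own proof also glosses over) is that on the end portions of $\breve{\mathcal{D}}_d$ with $s\in I_d\setminus I_{d/2}$ the radial coordinate may be smaller than $d/2$, so the lower bound $|x_q-\Gamma(s')|\ge\mathrm{const}\,d$ needed for the exponential decay must there be drawn from the longitudinal separation from $\Gamma$ rather than from the inequality $|x_q-\Gamma(s')|\ge r$.
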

\begin{proof}
We start from an elementary Schwarz inequality estimate,
$$
|(Wf^D_j,f^D_k )_{\breve{\mathcal{D}}_d} | \leq
\|Wf^D_j\|_{\breve{\mathcal{D}}_d} \|f^D_k
\|_{\breve{\mathcal{D}}_d}\,.
$$
Proceeding in analogy with Step 1 in the proof of
Lemma~\ref{le-ort}, cf.~(\ref{eq-estim3}), we get for the norm
$\|f_k ^D\|^2_{\breve{\mathcal{D}}_d}$ the bound
$$
\|f_k ^D\|^2_{\breve{\mathcal{D}}_d} = \|\eta f_k ^g\|^2
_{\breve{\mathcal{D}}_d} =  \int _{\breve{\mathcal{D}}_d}
g^{1/2}|\eta G(\kappa_j (\alpha ) ) \omega_j |^2 \mathrm{d}q =
\mathcal{O}( \mathrm{e}^{-C/d})\,.
$$
Next we  estimate $\|Wf^D_k\|_{\breve{\mathcal{D}}_d}$. Applying
 (\ref{eq-estimHam}) we obtain
\begin{eqnarray} \label{eq-W}
\lefteqn{Wf^D_k = W(\eta f^g _k )=  \eta \left( -\partial_s
h^{-2}\partial_s +T_\alpha +V  \right) f^g _k}  \\
\label{eq-estimB} && +
 \left( -\partial_{q_1} \partial_{\tilde{q}_1}\eta - \sum_{i=2}^3
\frac{d^2}{d\tilde{q}_i^2} \eta \right) f^g _k
 \\ && \label{eq-estimC}  - (\partial_{q_1}\eta )
 (\partial_{\tilde{q}_1}f^g_k )  - (\partial_{q_1}f_k^g )
 (\partial_{\tilde{q}_1} \eta  )
 \\ && \label{eq-estimD}  -2\sum_{i=2}^3\partial_{\tilde{q}_i} \eta
\partial_{\tilde{q}_i} f^g_k \,,
\end{eqnarray}
where we use the shorthands $\partial_{\tilde{q}_1}
=h^{-2}\partial_s$, $\partial_{\tilde{q}_2} =\partial_r$, and
$\partial_{\tilde{q}_3}=\frac{1}{r} \partial_\varphi$, and the
involved differential expressions have been defined in
(\ref{eq-defT}) and (\ref{eq-defV}). Since $(-\partial_s
h^{-2}\partial_s +T_\alpha +V) f^g_k (q)= \lambda_k
(H_{\alpha,\Gamma})f^g_k (q)$ holds for $q\in
\mathcal{D}^\mathrm{ex}_d$ and $|\eta| $ is bounded by assumption,
the norm of the right-hand-side expression of (\ref{eq-W}) can be
estimated by means of $|\lambda_k (H_{\alpha,\Gamma })|^2
\|f_k^g\|^2_{\mathcal{D}_d^\mathrm{ex}}$. Moreover, it is easy to
see that the factor appearing in the longitudinal part of the
operator satisfies $h^{-2}=1+\mathcal{O}(d)$ and $\partial_s
h^{-2}=\mathcal{O}(d)$ as $d\to 0$, which implies
$|\partial_{q_1}\partial_{\tilde{q}_1} \eta |\leq
\mathrm{const}\,d^{-2}$. Using further inequality
$|\partial^2_{\tilde{q}_i} \eta |\leq \mathrm{const}\,d^{-2}$,
$i=2,3$, we can estimate the norm of the expression
(\ref{eq-estimB}) by means of $d^{-4}
\|f_k^g\|^2_{\mathcal{D}_d^\mathrm{ex}}$.

Suppose that $x_q\in \Omega_d$. We put again $x_q = \phi_d (x)$
and denote $\rho (q;s'):=|x_q-\Gamma (s')|$. Since
$|\partial_{\tilde{q}_i}\rho| \leq \mathrm{const}\, d^{-1} $ we
have
$$
\left|\partial_{\tilde{q}_i} \frac{\mathrm{e^{-\kappa
\rho}}}{\rho}\right| \leq \mathrm{const}\,\frac{\mathrm{e^{-\kappa
\rho }}}{\rho^2}\left( \kappa +\frac{1}{\rho } \right)\,.
$$
Applying the above inequality to the expression (\ref{eq-estimC})
and combining this with the fact that the quantity
$|\partial_{\tilde{q}_i}\eta|$ entering (\ref{eq-estimD}) is
bounded by $\mathrm{const}\,d^{-1}$ we get
\begin{eqnarray} \nonumber
\lefteqn{\|W(\eta f^g _k )\|^2_{\breve{\mathcal{D}}_{d}}  \leq C_3
|\lambda_k (H_{\alpha ,\Gamma
})|^2\|f_k^g\|^2_{\mathcal{D}_d^\mathrm{ex}} +C_3'\,d^{-4}
\|f_k^g\|^2_{\mathcal{D}_d^\mathrm{ex}}} \\ && + C_3
''d^{-2}\|\omega_k \|^2_{L^1 (I)} \int_{d/2}^d
\frac{\mathrm{e^{-2\kappa r }}}{r^4}\left( \kappa +\frac{1}{r }
\right)^2 r\,\mathrm{d}r = \mathcal{O}(d^{-8} \mathrm{e}^{-C/d})
\phantom{AAA}
\end{eqnarray}
with appropriate constants. It completes the proof.
\end{proof}

The aim of the next lemma is to find out a lower bound for
$\|f^D_j \|_{\mathcal{D}^\mathrm{ex}_d}$ which will give us a
possibility to compare this norm with the small terms appearing in
relations (\ref{eq-ort}) and (\ref{eq-asmpW}).

\begin{lemma} \label{le-lower}
Let $d$ be given by (\ref{eq-defd}). Then there exists a $c>0$
such that
\begin{equation}\label{eq-upper}
 \|f_j^D \|^2_{\mathcal{D}^\mathrm{ex}_d} \geq  c\, d^8\,.
\end{equation}
\end{lemma}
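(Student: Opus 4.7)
\textit{Proof plan.} The strategy is to discard the cutoff $\eta$ and bound $\|f_j\|^2_{\Omega_{d/2}}$ from below by exploiting the logarithmic singularity of $f_j$ at $\Gamma$. Since $\eta\equiv 1$ on $\mathcal{D}^\mathrm{ex}_{d/2}$ and the transformation $f\mapsto f^g=\hat UUf$ is unitary from $L^2(\Omega_d)$ onto $L^2(\mathcal{D}^\mathrm{ex}_d,\D q)$, the inequality
$$
\|f_j^D\|^2_{\mathcal{D}^\mathrm{ex}_d}\,\ge\,\|f_j^g\|^2_{\mathcal{D}^\mathrm{ex}_{d/2}}\,=\,\|f_j\|^2_{\Omega_{d/2}}
$$
reduces the claim to a lower bound for the norm of $f_j$ on the thinner tube.

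Next, I would use that $f_j\in\mathrm{bc}(\alpha,\Gamma)$, which through \eqref{eq-bc1}--\eqref{eq-bc} enforces the pointwise expansion $f_j(\phi_d(s,r,\varphi))=-\Xi(f_j)(s)\ln r+\Omega(f_j)(s)+o(1)$ as $r\to 0$, subject to $\Omega(f_j)=2\pi\alpha\,\Xi(f_j)$. Inserting the representation $f_j=G(\kappa_j)\omega_j\ast\delta_\Gamma$ and using the identity
$$
\int_{\R}\frac{\e^{-\kappa\sqrt{u^2+r^2}}}{\sqrt{u^2+r^2}}\,\D u\,=\,2K_0(\kappa r),\qquad K_0(z)\,=\,-\ln z+\mathrm{const}+O(z^2|\ln z|)\;\text{as}\;z\to 0,
$$
one identifies $\Xi(f_j)(s)=\omega_j(s)/(2\pi)$ up to geometric corrections that vanish as $d\to 0$ in the bulk of $I$, so that
$$
f_j\bigl(\phi_d(s,r,\varphi)\bigr)\,=\,\frac{\omega_j(s)}{2\pi}\bigl(2\pi\alpha-\ln r\bigr)+o(1)\,.
$$

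For the third step, set $r_\alpha:=\e^{4\pi\alpha}=d^4$, which lies well inside $(0,d/2)$ since $r_\alpha/d=\e^{3\pi\alpha}\to 0$. For $r\le r_\alpha$ one has $-\ln r\ge 4\pi|\alpha|$, hence $|2\pi\alpha-\ln r|\ge 2\pi|\alpha|$; combined with an $o(1)$ bound on the remainder this yields $|f_j(\phi_d(s,r,\varphi))|\ge|\alpha||\omega_j(s)|/2$ once $|\alpha|$ is large. Using $g^{1/2}\ge r/2$ for $d$ small (cf.~\eqref{eq-ineqh}) and $\|\omega_j\|_I=1$, we conclude
$$
\|f_j\|^2_{\Omega_{d/2}}\,\ge\,\frac{\alpha^2}{8}\int_I|\omega_j(s)|^2\,\D s\int_0^{2\pi}\!\D\varphi\int_0^{r_\alpha}\!r\,\D r\,=\,\frac{\pi\alpha^2}{8}\,r_\alpha^2\,=\,\frac{\pi\alpha^2}{8}\,d^8\,,
$$
which exceeds $c\,d^8$ for any fixed $c>0$ and all sufficiently negative $\alpha$, because $\alpha^2\to\infty$.

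The principal obstacle is to make the expansion in the second paragraph effective rather than merely asymptotic, i.e.\ to control the $o(1)$ remainder uniformly in $s\in I$ and $\varphi\in[0,2\pi)$ so that it is strictly dominated by $|{-}\ln r+2\pi\alpha|\ge 2\pi|\alpha|$ throughout the inner tube $\{r\le r_\alpha\}$. A quantitative version of the Bessel-function expansion supplies a remainder of order $(\kappa_j r)^2|\ln(\kappa_j r)|=O(d^4|\ln d|)$, far smaller than $|\alpha|\sim|\ln d|$; the geometric corrections from $|\phi_d(s,r,\varphi)-\Gamma(s')|^2=(s-s')^2+r^2+O(d)$ and the boundary effects near $s=0,L$ are handled by the $C^4$ regularity of $\Gamma$ and by the exponential decay of the Bessel kernel, respectively.
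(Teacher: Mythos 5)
Your proposal is correct and follows essentially the same route as the paper: reduce to $\|f_j^g\|^2_{\mathcal{D}^\mathrm{ex}_{d/2}}$, use the boundary condition $2\pi\alpha\,\Xi(f_j)=\Omega(f_j)$ together with the logarithmic singularity to bound $|f_j|$ from below on the inner tube $r\le d^4=\e^{4\pi\alpha}$, and integrate $r\,\D r$ to obtain the $d^8$ rate. The only cosmetic difference is that you state the lower bound $|f_j|\ge|\alpha||\omega_j(s)|/2$ pointwise in $s$ (which fails where $\omega_j(s)$ is small); the paper avoids this by estimating the $L^2(I)$-norm of the restriction $f_j\!\upharpoonright_{\tilde\Gamma(r)}$ directly, which is how your final integral over $\int_I|\omega_j|^2\,\D s=1$ should be read.
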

\begin{proof}
We inspect first the behavior of the eigenfunction $f_{j}$ in
$\Omega_d$. Combining the boundary conditions (\ref{eq-bc1}) and
(\ref{eq-bc}) with (\ref{eq-BS3}) we get
$$
f_j \!\upharpoonright_{\Gamma (r)} =G(\kappa_{j}) \omega_j
\!\upharpoonright_{\Gamma (r)} = -\frac{1}{2\pi}\, \omega_j \ln r
+ \alpha\,\omega_j + o(r)\,,
$$
where the error term on the right-hand side means a function from
$L^2 (I)$ the norm of which is $o(r)$ uniformly in $\alpha$,
cf.~\cite{EK2}. Consider the curve distances $r\in (0, d^4)$, then
in view of (\ref{eq-defd}) the inequality
$$
-\ln r > -4\pi \alpha
$$
holds for any $\alpha$, in particular, for $\alpha\to-\infty$.
This implies
\begin{equation}\label{eq-asymshifted}
  \|f_j \!\upharpoonright_{\Gamma (r)} \|^2_{I} \geq
\left( \frac{\ln r}{4\pi }\right)^2 \|\omega_j\|^2_{I}+o(\ln r)
=\left( \frac{\ln r}{4\pi }\right)^2    +o(\ln r)\,,
\end{equation}
where we have used the fact that the functions $\omega_j$ are
normalized by assumption. Consequently, for $d$ small enough we
can estimate
\begin{equation}\nonumber 
  \|f_j^D \|^2_{\mathcal{D}^\mathrm{ex}_d} \geq   \|f_j^g\|^2_{\mathcal{D}^\mathrm{ex}_{d/2}} \geq
  4\pi \int_{0}^{d^4} \|f_j \!\upharpoonright_{\Gamma (r)} \|^2_{I}\,
  r\, \mathrm{d}r \geq  c\, d^8 +o(d^8 )\,,
\end{equation}
where $c$ is a positive constant.
\end{proof}

Combining the asymptotics (\ref{eq-ort}) with the bound
(\ref{eq-upper}) we obtain
\begin{equation}\label{eq-ortA}
(f^D_j, f^D_k)_{ \mathcal{D}^\mathrm{ex}_d}= \|f_j^D
\|^2_{\mathcal{D}^\mathrm{ex}_d} \delta_{jk}
  +\|f^D_j\|_{ \mathcal{D}^\mathrm{ex}_d} \|f^D_k\|_{ \mathcal{D}^\mathrm{ex}_d}
  R_2(d)\,,
\end{equation}
where
$$
R_2 (d)=\mathcal{O}(d^{-10} \mathrm{e}^{-C/d})\,;
$$
on the other hand, a combination of (\ref{eq-asmpW}) with
(\ref{eq-upper}) yields
\begin{equation}\label{eq-asmpWA}
|(Wf^D_j ,f^D_k )|_{\breve{\mathcal{D}}_d} = \|f^D_j
\|_{\breve{\mathcal{D}}_d}  \|f^D_k \|_{\breve{\mathcal{D}}_d} R_3
(d) \,,
\end{equation}
where
$$
R_3(d) = \mathcal{O}(d^{-16} \mathrm{e}^{-C/d})\,.
$$

\section{Eigenvalues of $W$} \label{s: ev-as}

In this section we are going to conclude the proof of
Theorem~\ref{th-main} by demonstrating the inequalities
(\ref{eq-evWmain}) and (\ref{eq-lbW}).

\subsection{A lower bound for $H_{\alpha , \Gamma }$ in the terms of $W$}\label{s: squeeze}

Our first aim is to derive inequality (\ref{eq-evWmain}) in a way
partially inspired by \cite{EP13}.
\begin{lemma}
Let $d$ be given by (\ref{eq-defd}), then for $\alpha \to -
\infty$ we have
\begin{equation}\label{eqWevasym}
  \lambda_j (W)\leq \lambda_j (H_{\alpha,\Gamma})+\mathcal{O}
  (d^{-18}\mathrm{e^{-C/d}})\,.
\end{equation}
\end{lemma}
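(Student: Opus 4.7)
The plan is to apply the min-max principle to $\lambda_j(W)$ with the trial subspace $\mathcal{L}_j=\mathrm{span}(f_1^D,\ldots,f_j^D)\subset D(W)$, where the Dirichlet approximants were constructed in the previous subsection. For an arbitrary $f=\sum_{k=1}^{j}c_k f_k^D$ I would estimate the Rayleigh quotient $(Wf,f)_{\mathcal{D}^\mathrm{ex}_d}/\|f\|^2_{\mathcal{D}^\mathrm{ex}_d}$ from above and take the maximum over the unit sphere in $\mathcal{L}_j$.

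The first step is to split each matrix element into its contribution from the inner tube $\mathcal{D}^\mathrm{ex}_{d/2}$ and from the shell $\breve{\mathcal{D}}_d$. On $\mathcal{D}^\mathrm{ex}_{d/2}$ we have $\eta\equiv 1$, so $f_k^D$ coincides there with $f_k^g$, and the identity \eqref{eq-estimHam} applied pointwise to the eigenfunction $f_k$ of $H_{\alpha,\Gamma}$ gives $(-\partial_s h^{-2}\partial_s+T_\alpha+V)f_k^g=\lambda_k(H_{\alpha,\Gamma})f_k^g$. Consequently
\[
(Wf_k^D,f_j^D)_{\mathcal{D}^\mathrm{ex}_{d/2}}=\lambda_k(H_{\alpha,\Gamma})\,(f_k^D,f_j^D)_{\mathcal{D}^\mathrm{ex}_{d/2}}.
\]
The shell contributions $(Wf_k^D,f_j^D)_{\breve{\mathcal{D}}_d}$ and $(f_k^D,f_j^D)_{\breve{\mathcal{D}}_d}$ are exactly the quantities controlled by \eqref{eq-asmpWA} and (the proof of) Lemma~\ref{le-ort}.

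The second step is to combine these ingredients. Inserting \eqref{eq-ortA} into the inner-tube formula and using $\lambda_k(H_{\alpha,\Gamma})\le\lambda_j(H_{\alpha,\Gamma})<0$ for $k\le j$, together with positivity of $\|f_k^D\|^2$, one obtains the inequality
\[
(Wf,f)_{\mathcal{D}^\mathrm{ex}_d}\le \lambda_j(H_{\alpha,\Gamma})\sum_{k=1}^{j}|c_k|^2\|f_k^D\|^2_{\mathcal{D}^\mathrm{ex}_d}+\mathcal{E}(c,d),
\]
where $\mathcal{E}(c,d)$ collects the off-diagonal remainders from \eqref{eq-ortA} (multiplied by $|\lambda_k|$) and the shell remainder from \eqref{eq-asmpWA}. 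Dividing by $\|f\|^2_{\mathcal{D}^\mathrm{ex}_d}=\sum|c_k|^2\|f_k^D\|^2+R_2$-type corrections and using the lower bound $\|f_k^D\|^2\ge c d^8$ from Lemma~\ref{le-lower} to normalize, the relative error shrinks to $\mathcal{O}(d^{-16}\mathrm{e}^{-C/d})$ amplified by $|\lambda_k|\sim d^{-2}$, yielding the claimed $\mathcal{O}(d^{-18}\mathrm{e}^{-C/d})$ bound once the supremum over unit vectors in $\mathcal{L}_j$ is taken.

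The hard part is the bookkeeping of these error factors: since $|\lambda_k(H_{\alpha,\Gamma})|$ diverges like $|\xi_\alpha|\sim d^{-4}$ but the integration volume and the normalizing $\|f_k^D\|^2\ge cd^8$ both involve inverse powers of $d$, one must make sure that the polynomial blow-up in $d^{-1}$ is absorbed by the exponential $\mathrm{e}^{-C/d}$ and that no coefficient of $|c_k|^2$ exceeds $\lambda_j(H_{\alpha,\Gamma})$ plus the advertised exponentially small remainder. This amounts to verifying that the off-diagonal corrections in \eqref{eq-ortA} and the shell contributions in \eqref{eq-asmpWA} dominate the cross-terms in $\mathcal{E}(c,d)$ uniformly in $(c_1,\ldots,c_j)$; the Schwarz inequality applied to each bilinear correction, combined with the normalization $\sum|c_k|^2\|f_k^D\|^2$ in the denominator, settles this uniformity and completes the proof.
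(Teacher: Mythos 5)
Your proposal is correct and follows essentially the same route as the paper: the min--max principle applied to the span of the Dirichlet approximants $f_1^D,\dots,f_j^D$, the eigenvalue identity $Wf_k^D=\lambda_k(H_{\alpha,\Gamma})f_k^D$ on $\mathcal{D}^\mathrm{ex}_{d/2}$, the shell estimate \eqref{eq-asmpWA}, the approximate orthogonality \eqref{eq-ortA}, and the normalization $\|f_k^D\|^2\ge cd^8$ from Lemma~\ref{le-lower}. The only blemish is the internal inconsistency in your power counting ($|\lambda_k|\sim d^{-2}$ in one place versus the correct $|\xi_\alpha|\sim d^{-4}$ in another), which is harmless here since the $\mathrm{e}^{-C/d}$ factor absorbs any fixed power of $d^{-1}$.
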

\begin{proof}
Fix a number $k\in\N$. According to the minimax principle we have
\begin{equation}\label{eq-minmax}
\lambda_k(W) = \sup_{\mathcal{S}_k} \inf_{f\in
{\mathcal{S}_k^\perp }} \frac{(Wf,f)_{\mathcal{D}^\mathrm{ex}_d}}
{\|f\|^2_{\mathcal{D}^\mathrm{ex}_d}}\,, \quad f\in D(W)\,,
\end{equation}
where  $S_k$ runs through $(k-1)$-dimensional subspaces of $L^2
(\mathcal{D}^\mathrm{ex}_d) \cap D(W)$. It follows from
Lemma~\ref{le-ort} that the functions $\{ f_j^D \}_{j=1}^N$ are
linearly independent for all $d$ small enough, and consequently,
at least one of the function from each $\mathcal{S}_k^{\perp}$
admits the decomposition
$$
h=\sum_{j=1}^k h_j f_j^D\,, \quad h_j \in \C\,,
$$
which means that
\begin{equation}\label{eq-minimaxB}
  \inf_{f\in {\mathcal{S}_k^\perp }} \frac{(W f,
f)_{\mathcal{D}^\mathrm{ex}_d}}{\|f\|^2_{\mathcal{D}^\mathrm{ex}_d}}
\leq \frac{(W h,
h)_{\mathcal{D}^\mathrm{ex}_d}}{\|h\|^2_{\mathcal{D}^\mathrm{ex}_d}}\,.
\end{equation}
Using next the fact that $(Wf_k^D)(q)
=\lambda_k(H_{\alpha,\Gamma}) f_k^D (q)$ holds for any $q\in
\mathcal{D}^\mathrm{ex}_{d/2}$ together with the asymptotic
relations (\ref{eq-asmpWA}) and (\ref{eq-ortA}) we get
\begin{equation} \label{minmaxC}
(W h,h)_{\mathcal{D}_d^\mathrm{ex}} =  \sum_{j=1}^k \lambda_j
(H_{\alpha,\Gamma})\, |h_j|^2\, \|f^D_j \|^2_{D^\mathrm{ex}_{d/2}}
+\sum_{i,j=1}^k h_i \overline{h_j }\,S_{ij}(d) \,,
\end{equation}
where
$$
S_{ij}(d):=    \|f^D_i\|_{ \mathcal{D}^\mathrm{ex}_d} \|f^D_j\|_{
\mathcal{D}^\mathrm{ex}_d}\,  \big( \lambda_j
(H_{\alpha,\Gamma})R_2(d/2)+ R_3
 (d) \big)\,.
$$
Consequently, using (\ref{eq-asymsought}), (\ref{eq-asmpW}),
(\ref{eq-asmpN}) and (\ref{eq-asyp}), we can estimate the last
term of (\ref{minmaxC}) as
\begin{eqnarray} \nonumber
\lefteqn{\Big|\sum_{i,j=1}^k  h_i \overline{h_j }S_{ij}(d) \Big|
\leq k \big(| \lambda_k (H_{\alpha,\Gamma})R_2(d/2)|+| R_3
 (d)| \big)\, \sum_{i=1}^k |h_i|^2
\|f^D_i\|^2 _{ \mathcal{D}^\mathrm{ex}_d}}  \\ && \phantom{AAAA}
\label{eq-auxest1} = R_4(d)\sum_{i=1}^k |h_i|^2 \|f_i^D\|^2 \,,
\phantom{AAAAAAAAAAAAAAAAAAAA}
\end{eqnarray}
where
$$
R_4(d) =\mathcal{O}(d^{-18}\mathrm{e}^{-C/d})\,.
$$
This yields
\begin{equation}\label{eq-estimW}
  (W h, h)_{\mathcal{D}_d^\mathrm{ex}}=  \sum_{j=1}^k \big(\lambda_j
(H_{\alpha,\Gamma})+ \mathcal{O}(d^{-18}\mathrm{e}^{-C/d}) \big)\,
|h_j|^2\|f^D_j \|^2_{D^\mathrm{ex}_{d}}\,.
\end{equation}
In the analogous way we can get an asymptotic expression for the
norm,
\begin{eqnarray} \label{eq-auxdenom}
\|h\|^2_{\mathcal{D}^\mathrm{ex}_d}= \sum_{j=1}^k\,
 \big(1+\mathcal{O}(d^{-12}\mathrm{e}^{-C/d})\big)\,
 |h_j|^2 \|f_j^D\|^2 _{\mathcal{D}^\mathrm{ex}_d} \,.
\end{eqnarray}
Combining now the relations (\ref{eq-auxest1}) and
(\ref{eq-auxdenom}), taking into account (\ref{eq-minmax}) and
(\ref{eq-minimaxB}), we arrive at the desired result.
\end{proof}

\subsection{A lower bound for $W$}

Finally, we are going to prove (\ref{eq-lbW}). It will be done in
two steps.

\noindent \emph{An auxiliary lower bound.}  Our first aim is to
show
\begin{equation}\label{eq-W-ev1}
  \lambda_j (W)\geq  \xi_\alpha +\lambda_j (S_d^\mathrm{ex}) + \mathcal{O}(d)\,,
\end{equation}
where
$$
S_d^\mathrm{ex}
=-\frac{d^2}{ds^2}-\frac14(\gamma^\mathrm{ex}_d)^2\,:
W_0^{2,2}(I_d) \to L^2 (I_d)\,.
$$
To prove this statement we recall that the operator $W$ is defined
as
$$
W=-\partial_s h^{-2}\partial_s +T_\alpha +V\,:\, D(W) \to L^2
(\mathcal{D}^{ex}_d)\,,
$$
where the functions $f\in D(W)$ satisfy Dirichlet boundary
conditions on $\partial \mathcal{D}^{ex}_d$ and $g^{-1/4}f\in
\mathrm{bc}(\alpha,\Gamma )$. In particular, those functions are
continuous away from $\Gamma $. Given an $s\in [-d, L+d]$ we
denote by $f_s \in L^2 (\mathcal{B}_d)$ the `cut' function, $f_s
(r,\varphi ) := f(s,r,\varphi )$ where  $f\in D(W) \subset  L^2
(\mathcal{D}^{ex}_d)$. Operator $T_\alpha $ can be decomposed into
a direct integral, $ T_\alpha = \int^\oplus_{[-d, L+d]} T_\alpha
(s)\,\mathrm{d}s$, on $L^2 (\mathcal{D}_d^{ex})=\int^\oplus_{[-d,
L+d]} L^2(\mathcal{B}_d) \,\mathrm{d}s$. In other words, for any
$s\in [0,L]$ the operators $T_\alpha (s)$ act as
\begin{equation} \label{eq-decompT}
T_\alpha (s)f_s = \left( - \partial^2_r  -
r^{-2}\partial^2_\varphi  - \frac{1}{4}r^{-2} \right)f_s\,,
\end{equation}
where $g_s^{1/4}f_s \in \mathrm{bc}(\alpha , r=0)$ and $f_s$
satisfies Dirichlet boundary conditions on $\partial
\mathcal{B}_d$. Furthermore, for $s\in [-d, 0)\cup (L, L+d]$
operators $T_\alpha (s)$ act as (\ref{eq-decompT}), however,
functions from their domains are regular at the origin as the
point interaction is absent at the extended parts of the curve. Of
course, they still satisfy  Dirichlet boundary conditions on
$\partial \mathcal{B}_d$. For a fixed $s \in[-d,L+d] $ we denote
by $\nu(s)$ the lowest eigenvalue of $T_\alpha (s)$. Using the
results of \cite[Lemma~3.6]{EK4} we conclude that
$$
\nu (s) = \xi_\alpha +\mathcal{O}(d^ {-9/2} \mathrm{e}^{C
/d})\quad \mathrm{for }\;\, s\in [0,L]\,,
$$
and
$$
\nu (s) > 0 \quad \mathrm{for }\;\, s\in [-d, 0)\cup (L, L+d]\,.
$$
Suppose that $\psi \in D(W)$ is normalized, $\|\psi
\|_{\mathcal{D}^{ex}_d}=1$.  Using (\ref{eq-defV}) together with
the above inequalities we get
\begin{eqnarray}
\lefteqn{(W\psi , \psi)_{\mathcal{D}^{ex}_d} =\left(
\Big(-\partial_s h^{-2}\partial_s -\frac14 (\gamma^\mathrm{ex})
^2\Big)\psi, \psi \right)_{\mathcal{D}^\mathrm{ex}_d}+(T_\alpha
\psi, \psi)_{\mathcal{D}^\mathrm{ex}_d}+\mathcal{O}(d)} \nonumber
\\ && \nonumber \geq \left( \Big(-\partial_s h^{-2}\partial_s
-\frac14(\gamma^{ex}) ^2\Big)\psi, \psi
\right)_{\mathcal{D}^{ex}_d}+(\nu  \psi,
\psi)_{\mathcal{D}^\mathrm{ex}_d}+\mathcal{O}(d) \\ && \geq \left(
\Big(-\partial_s h^{-2}\partial_s -\frac14(\gamma^\mathrm{ex})
^2\Big)\psi, \psi \right)_{\mathcal{D}^\mathrm{ex}_d}+\xi_\alpha
+\mathcal{O}(d) \,.
\end{eqnarray}
Using now the minimax principle in combination with the result of
\cite{EY1} we arrive at (\ref{eq-W-ev1}).

\smallskip

\noindent  \emph{Estimates for eigenvalues of $S^\mathrm{ex}_d$.}
The change-of-variable transformation
$$
s \to \frac{L}{L+2d}(s+d)
$$
turns $S_d^\mathrm{ex}$ into the operator acting in $L^2 (I)$ as
$$
\tilde{S}_d= - \left( \frac{L}{L+2d}
\right)^2\frac{\mathrm{d}^2}{\mathrm{d}s^2}-\frac14(\tilde{\gamma}_d)^2\,,
$$
where $ \tilde{\gamma}_d(s)= \gamma^\mathrm{ex}_d
\Big(\frac{L}{L+2d}(s+d)\Big)$. By construction we have
\begin{equation}\label{eq-ayxC}
\lambda_j (S^\mathrm{ex}_d) =\lambda_j  (\tilde{S}_d)\,.
\end{equation}
Moreover, since $|\tilde{\gamma}_d-\gamma | =\mathcal{O}(d)$ we
get
$$
\tilde{S}_d= -
\big(1+\mathcal{O}(d)\big)\frac{\mathrm{d}^2}{\mathrm{d}s^2}-\frac{\gamma
^2}{4}\,,
$$
which in view of (\ref{eq-ayxC}) implies
\begin{equation}\label{eq-Sd}
\lambda_j (S^\mathrm{ex}_d) =\lambda _j (S) +\mathcal{O}(d)\,.
\end{equation}
Combining this relation with (\ref{eq-W-ev1}) we arrive at the
sought lower bound (\ref{eq-lbW}).

\end{document}